\newtheorem{theorem}{Theorem}[section]
\newtheorem{teo}[theorem]{Theorem}
\newtheorem{lemma}[theorem]{Lemma}
\newtheorem{coro}[theorem]{Corollary}
\newtheorem{remark}[theorem]{Remark}
\newtheorem{definition}[theorem]{Definition}
\newcommand{\CC} {\ensuremath{\mathcal{C}}}
\newcommand{\F} {\ensuremath{\mathbb{F}}}
\newcommand{\Aut} {\textnormal{\textrm{Aut}}}
\journal{Finite Fields and their Applications}
\begin{document}

\begin{frontmatter}

\title{The automorphism group of a self-dual $[72,36,16]$ code is not an elementary abelian group of order $8$}

\author{Martino Borello}

\address{Member
INdAM-GNSAGA (Italy), IEEE\\

 Dipartimento di Matematica e
Applicazioni\\ Universit\`{a} degli Studi di Milano Bicocca\\
20125 Milan, Italy\\ e-mail: m.borello1@campus.unimib.it}

\begin{abstract}
The existence of an extremal self-dual binary linear code $\CC$ of
length $72$ is a long-standing open problem. We continue the
investigation of its automorphism group: looking at the combination
of the subcodes fixed by different involutions and doing a computer
calculation with {\sc Magma}, we prove that $\Aut(\CC)$ is not
isomorphic to the elementary abelian group of order $8$. Combining
this with the known results in the literature one obtains that
$\Aut(\CC)$ has order at most $5$.

\end{abstract}

\begin{keyword}
automorphism group \sep self-dual extremal codes
\end{keyword}

\end{frontmatter}

\section{Introduction}

A binary linear code of length $n$ is a subspace of $\F_2^n$, where
$\F_2$ is the field with $2$ elements. A binary linear code
$\mathcal{C}$ is called \emph{self-dual} if
$\mathcal{C}=\mathcal{C}^\perp$ with respect to the Euclidean inner
product. It follows immediately that the dimension of such a code
has to be the half of the length. The \emph{minimum distance} of
$\mathcal{C}$ is defined as
$\textnormal{d}(\mathcal{C}):=\min_{c\in\mathcal{C}\setminus\{0\}}\{\#\{i
\ | \ c_i=1\}\}$. In \cite{MSmindis} an upper bound for the minimum
distance of self-dual binary linear codes is given. Codes achieving
this bound are called \textit{extremal}. The most interesting codes,
for various reasons, are those whose length is a multiple of $24$:
in this case $\textnormal{d}(\mathcal{C})=4m+4$, where $24m$ is the
length of the code, and they give rise to beautiful combinatorial
structures \cite{AMdes}. There are unique extremal self-dual codes
of length $24$ (the extended binary Golay code $\mathcal{G}_{24}$)
and $48$ (the extended quadratic residue code $\mathcal{QR}_{48}$).
For nearly forty years many people have tried unsuccessfully to find
an extremal self-dual code of length $72$ \cite{S}. The usual
approach to this problem is to study the possible automorphism
groups (see next section for the detailed definition of it). Most of
the subgroups of $S_{72}$ are now excluded: the last result is
contained in \cite{BDN}, in which the authors finished to exclude
all the non-abelian groups with order greater than $5$.

In this paper we prove that the elementary abelian group of order
$8$ cannot occur as automorphism group of such a code, obtaining the
following.

\begin{theorem}
The automorphism group of a self-dual $[72,36,16]$ code is either
cyclic of order $1,2,3,4,5$ or elementary abelian of order $4$.
\end{theorem}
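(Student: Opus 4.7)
The plan is to deduce the stated classification by combining the main new lemma of this paper with the exclusions already established in the literature on the putative extremal code $\CC$ of length $72$.

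First I would assemble what is known. A long sequence of works, surveyed and completed in \cite{BDN}, has shown that every element of $\Aut(\CC)$ has order in $\{1,2,3,4,5\}$; that no non-abelian group of order greater than $5$ can act on $\CC$ (this is \cite{BDN}); and that, among abelian groups of order greater than $5$, all cases except the elementary abelian group $(\Z/2)^3$ of order $8$ have been ruled out by routine cycle-structure or invariant-theoretic arguments (thus $\Z/6$, $\Z/8$, $\Z/4\times\Z/2$, $\Z/3\times\Z/3$, and higher $p$-groups and their products all drop out). It is this residual case that the present paper is designed to kill.

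Second, I would prove $\Aut(\CC)\not\cong(\Z/2)^3$. Assume towards a contradiction that $G=\langle \sigma_1,\sigma_2,\sigma_3\rangle\cong(\Z/2)^3$ sits in $\Aut(\CC)$. For each of the seven involutions $\tau\in G\setminus\{1\}$ one has a fixed subcode $F(\tau)=\{c\in\CC : \tau(c)=c\}$, and the parameters $\dim F(\tau)$, the number of fixed points of $\tau$, and the minimum distance constraint $\textnormal{d}(\CC)=16$ reduce the possibilities for each $F(\tau)$ to a finite list. Exploiting the inclusions $F(\sigma_i)\cap F(\sigma_j)\subseteq F(\sigma_i\sigma_j)$ that follow from the abelian structure, I would intersect and glue the pairs $F(\sigma_i),F(\sigma_j)$ compatibly, parameterise the finitely many resulting candidates for $\CC$, and verify by a {\sc Magma} computation that none actually has minimum distance $16$. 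The real obstacle is organising this combinatorial combination of the seven fixed subcodes tightly enough that the search space is manageable.

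With $(\Z/2)^3$ thereby excluded, only the groups of order at most $5$ survive, namely the cyclic groups of orders $1,2,3,4,5$ and the elementary abelian group of order $4$, which is the conclusion of the theorem.
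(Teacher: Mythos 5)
Your first step (reducing the theorem to excluding $(\Z/2)^3$ by citing the known exclusions in the literature) matches the paper, and your general strategy for the exclusion --- fixed subcodes of the involutions, which by \cite{Bord2} are fixed point free and by \cite{Neven} project onto self-dual $[36,18,8]$ codes classified in \cite{Gaborit}, glued pairwise and examined with {\sc Magma} --- is also the paper's setup. But your finishing step has a genuine gap. You propose to glue the fixed codes pairwise, ``parameterise the finitely many resulting candidates for $\CC$'', and check that none has minimum distance $16$. This fails on two counts. First, checking the pairwise sums for extremality does not end the argument: the paper finds that $22$ inequivalent codes of the form $\CC(\alpha)+\CC(\beta)$ \emph{do} have minimum distance $16$ (they are $[72,26,16]$ codes), so the search survives this stage rather than terminating in a contradiction. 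Second, the fixed subcodes do not generate $\CC$, so there are no ``candidates for $\CC$'' to enumerate directly: each $\CC(\sigma)$ has dimension $18$, the pairwise intersections turn out to have dimension $10$ and the triple intersection dimension $5$, so even $\CC(\alpha)+\CC(\beta)+\CC(\gamma)$ has dimension at most $29<36$. To test minimum distance of candidates for $\CC$ itself you would have to enumerate all extremal self-dual overcodes of a $29$-dimensional self-orthogonal code in $\F_2^{72}$, a further large search that your plan does not organise; indeed you flag exactly this (``the real obstacle is organising this combinatorial combination \ldots'') without resolving it.

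The paper's actual contradiction is a pure dimension-bookkeeping argument and never reconstructs $\CC$. On one hand, a direct computation on the $14$ admissible $[36,18,8]$ codes (Lemma \ref{cases}) shows that the triple of dimensions $\bigl(\dim(\CC(\chi')\cap\CC(\mu')\cap\CC(\zeta')),\ \dim(\CC(\chi')\cap\CC(\mu')),\ \dim(\CC(\chi')\cap\CC(\zeta'))\bigr)$ can take only seven values; in particular, whenever the triple intersection has dimension $5$, one of the listed pairwise intersections must have dimension $9$. On the other hand, the classification of the $22$ extremal pairwise sums forces every pairwise intersection $\CC(\sigma)\cap\CC(\tau)$ to have dimension $10$ and the triple intersection to have dimension $5$ (equations \eqref{eq1} and \eqref{eq2}). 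The pattern $(5,10,10)$ does not appear in the table, and this incompatibility is the contradiction. This intersection-dimension argument (together with the orbit-representative machinery of Lemma \ref{rep}, which keeps the gluing up to the action of the centralizer and hence computationally feasible) is the key idea your proposal is missing; without it, or a substitute for it, your plan stalls after the pairwise gluing.
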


The techniques which we use are similar to those of \cite{Baut6}. We
know \cite{Neven}, up to equivalence, the possible subcodes fixed by
all the non-trivial involutions. So we combine them pairwise,
checking the minimum distance to be $16$, and we classify their sum,
up to equivalence. We get only a few extremal codes and all of them
satisfy certain intersection properties that, with easy dimension
arguments, make it impossible to sum a third fixed subcode without
loosing the extremality.

All results are obtained using extensive computations in {\sc Magma}
\cite{Magma}.

\section{Basic definitions and notations}

Throughout the paper we will use the following notations for groups:
\begin{itemize}
  \item $C_m$ is the cyclic group of order $m$;
  \item $S_m$ is the symmetric group of degree $m$;
  \item if $A$ and $B$ are two groups, $A\times B$ indicates their direct
  product;
  \item if $A$ and $B$ are two groups, $A\wr B$ indicates their wreath
  product.
\end{itemize}
Given a group $G$ and a subgroup $H$ of $G$ ($H\leq G$) we denote
$\textnormal{C}_G(H)$ the centralizer of $H$ in $G$. Let $\kappa\in
G$. Then
$\textnormal{C}_G(\kappa):=\textnormal{C}_G(\langle\kappa\rangle)$,
where $\langle\kappa\rangle$ is the (cyclic) group generated by
$\kappa$.

Let us consider the ambient space $\F_2^n$. We will indicate with
calligraphic capital letters the subspaces of $\F_2^n$, in order to
distinguish them from groups. We have a natural (right) action of
$S_n$ on $\F_2^n$ defined as follows: let $v=(v_1,\ldots,v_n)\in
\F_2^n$ and $\sigma\in S_n$; then
$$v^\sigma:=(v_{1^{\sigma^{-1}}},\ldots,v_{n^{\sigma^{-1}}}).$$
We have an action induced naturally on the subspaces of $\F_2^n$:
$$\mathcal{C}^\sigma:=\{c^\sigma \ | \ c\in\mathcal{C}\},$$
where $\CC\leq \F_2^n$ and $\sigma\in S_n$.

Let $\CC\leq \F_2^n$. Then the \emph{automorphism group} of the code
$\CC$ is the subgroup of $S_n$ defined as
$$\Aut(\CC):=\{\sigma\in S_n \ | \ \CC^\sigma=\CC\}.$$
Given a code $\CC$ and an automorphism $\sigma\in\Aut(\CC)$ we
define
$$\CC(\sigma):=\{c\in\CC \ | \ c^\sigma=c\}.$$
This is a subcode of $\CC$ and we call it the \emph{subcode fixed
by} $\sigma$.

\section{Preliminary observations}

Let ${\mathcal C}$ be a self-dual $[72,36,16]$ code such that
$$\Aut({\mathcal C})\cong C_2\times C_2\times C_2=\langle
\alpha,\beta,\gamma \rangle.$$
By \cite{Bord2} all non-trivial
elements of $\Aut(\CC)$ are fixed point free (that is of degree $n$)
and we may relabel the coordinates so that
$$\begin{array}{l}
\alpha = (1, 2)(3, 4)(5, 6)(7, 8) \ldots (71,72)   \\
\beta = (1, 3)(2, 4)(5, 7)(6, 8) \ldots (70,72)   \\
\gamma = (1, 5)(2, 6)(3, 7)(4, 8) \ldots (68,72).
\end{array}
$$

\begin{definition}
Let $\mathcal{V}:=\F_2^n$. Then
$$\begin{array}{l} \pi_\alpha:  \mathcal{V}(\alpha) \to \F_2^{36}  \\
(v_1,v_1,v_2,v_2,v_3,v_3,v_4,v_4, \ldots,v_{36},v_{36} ) \mapsto
 (v_1,v_2,\ldots , v_{36} ) \end{array}  $$
denote the bijection between the subspace of fixed by $\alpha$ and
$\F_2^{36}$,
$$\begin{array}{l} \pi_\beta:  \mathcal{V}(\beta) \to \F_2^{36}  \\
(v_1,v_2,v_1,v_2,v_3,v_4,v_3,v_4 \ldots,v_{35},v_{36} ) \mapsto
 (v_1,v_2,\ldots , v_{36} ) \end{array}  $$
denote the bijection between the subspace fixed by $\beta$ and
$\F_2^{36}$ and
$$\begin{array}{l} \pi_\gamma:  \mathcal{V}(\gamma) \to \F_2^{36}  \\
(v_1,v_2,v_3,v_4,v_1,v_2,v_3,v_4 \ldots,v_{35},v_{36} ) \mapsto
 (v_1,v_2,\ldots , v_{36} ) \end{array}  $$
denote the bijection between the subspace fixed by $\gamma$ and
$\F_2^{36}$.
\end{definition}

\begin{remark}
The centralizer $\textnormal{C}_{S_{72}}(\alpha) \cong C_2 \wr
S_{36}$ of $\alpha$ acts on the set of fixed points of $\alpha$.
Using the isomorphism $\pi_\alpha$ we hence obtain a group
epimorphism which we denote by $\eta_\alpha$
$$\eta_\alpha : \textnormal{C}_{S_{72}}(\alpha) \to S_{36}$$
with kernel $C_2^{36}$. Similarly we obtain the epimorphisms
$$\eta_\beta : \textnormal{C}_{S_{72}}(\beta) \to S_{36}$$
and
$$\eta_\gamma : \textnormal{C}_{S_{72}}(\gamma) \to S_{36}.$$
\end{remark}

By \cite{Neven} we have that all the projections of the fixed codes
$\pi_\alpha(\CC(\alpha))$,$\pi_\beta(\CC(\beta))$ and
$\pi_\gamma(\CC(\gamma))$ are self-dual $[36,18,8]$ codes. Such
codes have been classified in \cite{Gaborit}, up to equivalence
(under the action of the full symmetric group $S_{36}$) there are
$41$ such codes. Notice that
$$\langle \eta_\alpha(\beta),\eta_\alpha(\gamma)\rangle=\langle
\eta_\beta(\alpha),\eta_\beta(\gamma)\rangle = \langle
\pi_\gamma(\alpha),\eta_\gamma(\beta)\rangle=\langle \chi,\mu
\rangle\leq S_{36},$$ with
$$\chi=(1,2)(3,4)\ldots(35,36)$$
and
$$\mu=(1,3)(2,4)\ldots(34,36),$$
are contained in $\Aut(\pi_\alpha(\CC(\alpha)))$,
$\Aut(\pi_\beta(\CC(\beta)))$ and $\Aut(\pi_\gamma(\CC(\gamma)))$
respectively. Only $14$ of the $41$ codes, say
$\mathbb{Y}:=\{\mathcal{Y}_1,\ldots,\mathcal{Y}_{14}\}$, have an
automorphism group which contains at least one subgroup conjugate to
$\langle \chi,\mu \rangle$.

By direct calculation on these $14$ codes we get the following
conditions on the intersection of the codes.

\begin{lemma}\label{cases}
Let
$$(\chi',\mu',\zeta')\in\{(\alpha,\beta,\gamma),(\alpha,\gamma,\beta),(\beta,\alpha,\gamma),(\beta,\gamma,\alpha),(\gamma,\alpha,\beta),(\gamma,\beta,\alpha)\}.$$
Then we have only the following possibilities:
$$\begin{tabular}{|c|c|c|} \hline
$\dim(\CC(\chi')\cap\CC(\mu')\cap\CC(\zeta'))$ &
$\dim(\CC(\chi')\cap\CC(\mu'))$ &
$\dim(\CC(\chi')\cap\CC(\zeta'))$ \\
  \hline
  $5$ & $9$ & $9$ \\
  $5$ & $9$ & $10$ \\
  $6$ & $9$ & $9$ \\
  $6$ & $9$ & $10$ \\
  $6$ & $9$ & $11$ \\
  $6$ & $10$ & $10$ \\
  $6$ & $10$ & $11$ \\
  \hline
\end{tabular}$$
\end{lemma}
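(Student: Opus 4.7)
The plan is to translate each of the three dimensions appearing in the statement into dimensions of fixed subcodes inside one of the fourteen classified codes $\mathcal{Y}_i\in\mathbb{Y}$, and then to settle the lemma by a finite enumeration in \textsc{Magma}. Fix an ordering $(\chi',\mu',\zeta')$. Since $\mu'$ and $\zeta'$ commute with $\chi'$, both preserve $\CC(\chi')$; under $\pi_{\chi'}$ their induced actions become those of the involutions $a:=\eta_{\chi'}(\mu')$ and $b:=\eta_{\chi'}(\zeta')$ on $\pi_{\chi'}(\CC(\chi'))$. Because $\pi_{\chi'}$ is a linear bijection, I would obtain $\pi_{\chi'}(\CC(\chi')\cap\CC(\mu'))=\pi_{\chi'}(\CC(\chi'))(a)$, and likewise for $\zeta'$ and for the triple intersection. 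Thus every dimension in the lemma equals the dimension of a fixed subcode (or of the intersection of two) of one of the $\mathcal{Y}_i$ under a pair of involutions whose generated subgroup is $S_{36}$-conjugate to $\langle\chi,\mu\rangle$.

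This reduces the lemma to a finite check. For each $\mathcal{Y}_i\in\mathbb{Y}$ I would use \textsc{Magma} to enumerate, up to $\Aut(\mathcal{Y}_i)$-conjugacy, all subgroups $H\leq\Aut(\mathcal{Y}_i)$ that are $S_{36}$-conjugate to $\langle\chi,\mu\rangle$. For each such $H$ and each ordered choice of two distinct non-trivial generators $a,b\in H$ I would then compute the triple $\bigl(\dim(\mathcal{Y}_i(a)\cap\mathcal{Y}_i(b)),\,\dim\mathcal{Y}_i(a),\,\dim\mathcal{Y}_i(b)\bigr)$. The final step is to aggregate all the triples arising over the fourteen $\mathcal{Y}_i$ and verify that every one of them matches one of the seven rows displayed. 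The codes in $\mathbb{Y}$, their automorphism groups, and the fixed-point-free involutions $\chi,\mu$ are all explicit, so this step is mechanical.

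\emph{Main obstacle.} The argument is computational rather than conceptual, and the delicate point is completeness. A single $\mathcal{Y}_i$ may contain several non-conjugate copies of $\langle\chi,\mu\rangle$, each yielding a different triple, so one must not fix one embedding but loop over all of them. Moreover, the table constrains the full triple, not just its individual entries, so the three dimensions must be recorded jointly for each pair $(a,b)$ rather than aggregated separately. Once this bookkeeping is handled correctly, the seven rows should emerge as the complete list of possibilities; the absence of, say, a $(5,10,10)$ or $(5,9,11)$ row is then a genuine output of the enumeration rather than an a priori restriction.
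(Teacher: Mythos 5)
Your proposal is correct and matches the paper's approach: the paper justifies this lemma only with the phrase ``by direct calculation on these $14$ codes,'' and your reduction via $\pi_{\chi'}$ and $\eta_{\chi'}$ --- identifying each dimension with that of a fixed subcode (or intersection of two) of some $\mathcal{Y}_i$ under involutions of a Klein four subgroup conjugate to $\langle\chi,\mu\rangle$, then enumerating such subgroups up to $\Aut(\mathcal{Y}_i)$-conjugacy and all ordered pairs of their non-trivial elements --- is exactly the computation the paper performs. The two caveats you flag (looping over all non-conjugate embeddings of $\langle\chi,\mu\rangle$ and recording the three dimensions jointly as a triple rather than aggregating them separately) are precisely the right ones for completeness.
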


Let $G:= \textnormal{C}_{S_{72}}(\Aut(\CC))$. Then $G$ acts on the
set of extremal self-dual codes with automorphism group $\langle
\alpha,\beta,\gamma\rangle$ and we aim to find a system of orbit
representatives for this action. Here we have some differences with
the non-abelian cases, since the full group $\langle
\alpha,\beta,\gamma\rangle$ is a subgroup of the automorphism group
of all the fixed subcodes $\CC(\alpha),\CC(\beta)$ and
$\CC(\gamma)$. The main property that we use is the following, which
is straightforward to prove:
\begin{equation}\label{inter}
\pi_\alpha(\CC(\alpha))(\chi)=\pi_\beta(\CC(\beta))(\chi)
\end{equation}
and similar relations for the other fixed subcodes. This allows us
to combine properly $\CC(\alpha)$ and $\CC(\beta)$ classifying their
sum.

\section{Description of the calculations}

Let $$\mathbb{D}:=\{\mathcal{D}=\mathcal{D}^\perp\leq \F_2^{36} \ |
\ \textnormal{d}(\mathcal{D})=8, \ \langle \chi,\mu\rangle \leq
\Aut(\mathcal{D}) \}.$$
The group
$$G_{36}:=\textnormal{C}_{S_{36}}(\langle
\chi,\mu\rangle)=\eta_\alpha(G)=\eta_\beta(G)=\eta_\gamma(G)$$ acts,
naturally, on this set.

\begin{lemma}\label{rep}
A set of representatives of the $G_{36}$-orbits on $\mathbb{D}$ can
be computed by performing the following computations on each
$\mathcal{Y}\in\mathbb{Y}$:
\begin{itemize}
\item Let $\chi_1,\ldots,\chi_{s}$ represent the
conjugacy classes of fixed point free elements of order $2$ in
$\Aut(\mathcal{Y})$.
\item Compute elements $\tau_1,\ldots,\tau_{s} \in S_{36}$ such that
$\tau_k^{-1}\chi_k\tau_k =\chi$ and put $\mathcal{Y}_{k} :=
\mathcal{Y}^{\tau_k}$ so that $\chi \in \Aut (\mathcal{Y}_{k})$.
\item For every $\mathcal{Y}_k$, consider the set of fixed point free elements $\tilde{\mu}$ of order $2$ in
$\textnormal{C}_{\Aut(\mathcal{Y}_k)}(\chi)$ such that $\langle
\chi,\tilde{\mu}\rangle$ is conjugate to $\langle \chi,\mu\rangle$
in $S_{36}$. Let $\mu_1,\ldots,\mu_{t_k}$ represent the
$\textnormal{C}_{\Aut(\mathcal{Y}_k)}(\chi)$-conjugacy classes in
this set.
\item Compute elements $\sigma_1,\ldots,\sigma_{t_k} \in \textnormal{C}_{S_{36}}(\chi)$ such that
$\sigma_l^{-1}\mu_l\sigma_l =\mu$ and put $\mathcal{Y}_{k,l} :=
\mathcal{Y}_k^{\sigma_l}$ so that $\langle \chi,\mu\rangle \leq \Aut
(\mathcal{Y}_{k,l})$.
\end{itemize}
Then $\mathbb{D}':=\{ \mathcal{Y}_{k,l} \mid
\mathcal{Y}\in\mathbb{Y}, 1\leq k \leq s, 1\leq l \leq t_k \} $
represents the $G_{36}$-orbits on ${\mathbb D}$.
\end{lemma}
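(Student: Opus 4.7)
The plan is to establish completeness, i.e.\ every $\mathcal{D}\in\mathbb{D}$ lies in the $G_{36}$-orbit of some $\mathcal{Y}_{k,l}\in\mathbb{D}'$. The construction is a two-stage orbit reduction: starting from the $S_{36}$-equivalence given by the classification of the $14$ codes in $\mathbb{Y}$, I would twice replace the connecting permutation, first by an element of $\textnormal{C}_{S_{36}}(\chi)$, then by an element of $\textnormal{C}_{S_{36}}(\langle\chi,\mu\rangle)=G_{36}$. The four bullets in the statement are designed exactly to effect these two reductions.

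For the first stage, pick $\mathcal{Y}\in\mathbb{Y}$ and $\rho\in S_{36}$ with $\mathcal{D}=\mathcal{Y}^\rho$. Since $\chi\in\Aut(\mathcal{D})=\rho^{-1}\Aut(\mathcal{Y})\rho$, the element $\chi'':=\rho\chi\rho^{-1}$ is a fixed-point-free involution in $\Aut(\mathcal{Y})$, so by the first bullet there exist $\chi_k$ and $g\in\Aut(\mathcal{Y})$ with $g\chi''g^{-1}=\chi_k$. Combined with $\tau_k^{-1}\chi_k\tau_k=\chi$ from the second bullet, this shows that $\delta':=\tau_k^{-1}g\rho$ centralizes $\chi$. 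Using the right action of $S_{36}$, one checks $\mathcal{Y}_k^{\delta'}=\mathcal{Y}^{g\rho}=\mathcal{Y}^\rho=\mathcal{D}$, so $\mathcal{D}=\mathcal{Y}_k^{\delta'}$ with $\delta'\in\textnormal{C}_{S_{36}}(\chi)$.

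For the second stage, $\mu':=\delta'\mu(\delta')^{-1}\in\Aut(\mathcal{Y}_k)$ is a fixed-point-free involution, it centralizes $\chi$ because both $\mu$ and $\delta'$ do, and $\langle\chi,\mu'\rangle=\delta'\langle\chi,\mu\rangle(\delta')^{-1}$ is $S_{36}$-conjugate to $\langle\chi,\mu\rangle$. Hence $\mu'$ is exactly of the form considered in the third bullet for $\mathcal{Y}_k$, and is therefore $\textnormal{C}_{\Aut(\mathcal{Y}_k)}(\chi)$-conjugate to some $\mu_l$: write $g'\mu'(g')^{-1}=\mu_l$ with $g'\in\textnormal{C}_{\Aut(\mathcal{Y}_k)}(\chi)$. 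Combined with $\sigma_l^{-1}\mu_l\sigma_l=\mu$, the element $h:=\sigma_l^{-1}g'\delta'$ satisfies $h\mu h^{-1}=\mu$ and also centralizes $\chi$ since each of its three factors does; thus $h\in G_{36}$. A direct computation then gives $\mathcal{Y}_{k,l}^h=\mathcal{Y}_k^{\sigma_l h}=\mathcal{Y}_k^{g'\delta'}=\mathcal{Y}_k^{\delta'}=\mathcal{D}$, exhibiting the desired $G_{36}$-equivalence.

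The only real obstacle is the bookkeeping: the direction of each conjugation must be tracked carefully, and at every stage one uses that the part of $\langle\chi,\mu\rangle$ already normalized commutes with everything introduced next (this is precisely why $\sigma_l$ is required to lie in $\textnormal{C}_{S_{36}}(\chi)$ rather than being an arbitrary element of $S_{36}$). Note that distinctness of the $G_{36}$-orbits indexed by the pairs $(k,l)$ is not claimed by the statement; any duplicates will be eliminated in the subsequent {\sc Magma} enumeration.
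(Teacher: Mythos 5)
Your completeness argument is correct, and it is essentially the same two-stage reduction the paper performs (the paper writes the connecting permutation in the direction $\mathcal{Z}^{\xi}=\mathcal{Y}$ rather than $\mathcal{D}=\mathcal{Y}^{\rho}$, but the content is identical): first replace the permutation linking $\mathcal{D}$ to $\mathcal{Y}$ by one in $\textnormal{C}_{S_{36}}(\chi)$, using the classification of fixed-point-free involutions up to $\Aut(\mathcal{Y})$-conjugacy; then replace it by one in $G_{36}$, using the classification of the admissible $\tilde{\mu}$ up to $\textnormal{C}_{\Aut(\mathcal{Y}_k)}(\chi)$-conjugacy. The bookkeeping ($\delta'$ centralizes $\chi$, $\mathcal{Y}_k^{\delta'}=\mathcal{D}$, $h\in G_{36}$, $\mathcal{Y}_{k,l}^{h}=\mathcal{D}$) checks out.

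There is, however, a genuine gap: you prove only that every element of $\mathbb{D}$ lies in the $G_{36}$-orbit of some $\mathcal{Y}_{k,l}$, and you explicitly decline to prove that distinct triples $(\mathcal{Y},k,l)$ yield distinct orbits, asserting that this is ``not claimed by the statement.'' It is claimed: ``a set of representatives of the $G_{36}$-orbits'' and ``$\mathbb{D}'$ represents the $G_{36}$-orbits'' are the standard phrases for a transversal, i.e.\ exactly one code per orbit, and the paper devotes the entire first half of its proof to this point. The argument is short and you should supply it: if $\mathcal{Y}_{k',l'}^{\lambda}=\mathcal{Y}_{k,l}$ with $\lambda\in G_{36}$, then $\epsilon:=\tau_{k'}\sigma_{l'}\lambda\sigma_l^{-1}\tau_k^{-1}\in\Aut(\mathcal{Y})$ conjugates $\chi_k$ to $\chi_{k'}$ (because $\sigma_l,\sigma_{l'},\lambda$ all centralize $\chi$), forcing $k=k'$; then $\epsilon':=\sigma_{l'}\lambda\sigma_l^{-1}\in\textnormal{C}_{\Aut(\mathcal{Y}_k)}(\chi)$ conjugates $\mu_l$ to $\mu_{l'}$ (because $\lambda$ centralizes $\mu$), forcing $l=l'$; and codes built from different members of $\mathbb{Y}$ cannot be $G_{36}$-equivalent since they are already inequivalent under $S_{36}$. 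This half of the lemma is not mere tidiness: it underwrites the subsequent count of $242$ representatives and the later assertion that codes lying in different sets $\tilde{\mathbb{D}}_i$ are inequivalent under $G_{36}$, so deferring it to the {\sc Magma} enumeration does not match how the result is actually used.
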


\begin{proof}
Clearly these codes lie in ${\mathbb D}$. \\
Since $G_{36}\leq S_{36}$, if we consider different elements in
$\mathbb{Y}$, say $\mathcal{Y}$ and $\mathcal{Y}'$, then
$\mathcal{Y}'_{k',l'}$ is not in the same orbit of
$\mathcal{Y}_{k,l}$ for any
$k',l',k,l$.\\
Now assume that there is some $\lambda \in G_{36}$ such that
$$\mathcal{Y}^{\tau _{k'}\sigma_{l'}} = \mathcal{Y}_{k',l'} ^{\lambda } = \mathcal{Y}_{k,l}
= \mathcal{Y}^{\tau _{k} \sigma_l  }.$$ Then
$$\epsilon :=  \tau _{k'} \sigma_{l'} \lambda \sigma_{l} ^{-1} \tau _{k} ^{-1}
\in \Aut (\mathcal{Y}) $$ satisfies $\epsilon \chi_k \epsilon ^{-1}
= \chi_{k'} $, so $\chi_k$ and $\chi_{k'}$ are conjugate in $\Aut
(\mathcal{Y})$, which implies $k=k'$ (and so $\tau_k=\tau_{k'}$).
Now,
$$\mathcal{Y}^{\tau _{k} \sigma_{l'} \lambda } =\mathcal{Y}_{k}^{\sigma_{l'}
\lambda}= \mathcal{Y}_k^{\sigma_{l}  } = \mathcal{Y}^{\tau_{k}
\sigma_l}.$$ Then
$$\epsilon' :=  \sigma_{l'} \lambda \sigma_{l}^{-1}
\in \Aut (\mathcal{Y}_k)$$ commutes with $\chi$. Furthermore
$\epsilon' \sigma_l {\epsilon'}^{-1} = \sigma_{l'} $ and hence
$l=l'$.

Now let $\mathcal{Z} \in {\mathbb D}$ and choose some $\xi \in
S_{36}$ such that $\mathcal{Z}^{\xi } = \mathcal{Y}\in\mathbb{Y} $.
Then $\xi^{-1}\chi\xi$ is conjugate to some of the chosen
representatives $\chi_k \in \Aut(\mathcal{Y})$ ($i=1,\ldots ,s$) and
we may multiply $\xi $ by some automorphism of $\mathcal{Y}$ so that
$$\xi^{-1}\chi\xi = \chi_k = \tau_k \chi  \tau_k^{-1}.$$ So $\xi
\tau_k \in \textnormal{C}_{S_{36}} (\chi)$ and $\mathcal{Z} ^{\xi
\tau _k } = \mathcal{Y} ^{\tau _k}=\mathcal{Y}_k $.\\ It is
straightforward to prove that the element $(\xi \tau _k)^{-1}\mu(\xi
\tau _k) \in \Aut (\mathcal{Y}_k)$ is a fixed point free element of
order $2$ in $\textnormal{C}_{\Aut(\mathcal{Y}_k)}(\chi)$ such that
$\langle \chi,(\xi \tau _k)^{-1}\mu(\xi \tau _k) \rangle$ is
conjugate to $\langle \chi,\mu\rangle$ in $S_{36}$. So there is some
automorphism $\omega \in \textnormal{C}_{\Aut(\mathcal{Y}_k)}(\chi)$
and some $l \in \{1,\ldots , t_k \}$ such that $(\xi \tau
_k\omega)^{-1}\mu(\xi \tau _k\omega) = \mu_l$. Then
$$\mathcal{Y}^{\xi \tau_k \omega \sigma_l}=\mathcal{Y}_{k,l} $$ where
$\xi \tau_k \omega \sigma_l \in G_{36}$.
\end{proof}

There are $242$ such representatives. For our purposes we need to
modify this set a little: consider the set $\{\mathcal{Y}(\chi)\ | \
\mathcal{Y}\in \mathbb{D}\}$ and take a set of representatives for
the action of $G_{36}$ on this set, say
$\mathbb{E}:=\{\mathcal{E}_1,\ldots,\mathcal{E}_m\}$. By
calculations $m=40$. For every $1\leq i\leq m$ define the set
$$\tilde{\mathbb{D}}_i:=\{\mathcal{Y}^\epsilon \ | \ \mathcal{Y} \in
\mathbb{D}' \ \text{such that there exists} \ \epsilon \in G_{36} \
\text{so that} \ \mathcal{Y}(\chi)^\epsilon =\mathcal{E}_i\}.$$
Clearly $\bigcup_{i=1}^m \tilde{\mathbb{D}}_i$ is still a set of
representatives of the  $G_{36}$-orbits on ${\mathbb D}$, but now
$\mathcal{Y}_j(\chi)$ and $\mathcal{Y}_k(\chi)$ are equal if
$\mathcal{Y}_j$ and $\mathcal{Y}_k$ belong to the same
$\tilde{\mathbb{D}}_i$ and they are not equivalent via the action of
$G_{36}$ if $\mathcal{Y}_j$ and $\mathcal{Y}_k$ do not belong to the
same $\tilde{\mathbb{D}}_i$.

Let
$$\mathbb{D}_{{(\alpha,\beta)}_i}=\{\pi_\alpha^{-1}(\mathcal{Y}_\alpha)+(\pi_\beta^{-1}(\mathcal{Y}_\beta))^\omega \leq \F_2^{72}
\ | \ \mathcal{Y}_\alpha,\mathcal{Y}_\beta\in \tilde{\mathbb{D}}_i,
\omega\in \textnormal{C}_{\Aut(\mathcal{Y}_\beta(\chi))}(\langle
\chi,\mu\rangle)\}.$$

\begin{remark}
Considering $(\pi_\beta^{-1}(\mathcal{Y}_\beta))^\omega$ with
$\omega$ varying in
$\textnormal{C}_{\Aut(\mathcal{Y}_\beta(\chi))}(\langle
\chi,\mu\rangle)$ is exactly the same as considering
$(\pi_\beta^{-1}(\mathcal{Y}_\beta))^\tau$ with $\tau$ varying in a
right transversal of $$\Aut(\mathcal{Y}_\beta(\chi)) \cap
\textnormal{C}_{\Aut(\mathcal{Y}_\beta(\chi))}(\langle
\chi,\mu\rangle)$$ in
$$\textnormal{C}_{\Aut(\mathcal{Y}_\beta(\chi))}(\langle
\chi,\mu\rangle).$$ Obviously this makes the calculations faster.
\end{remark}

\begin{lemma}
The code $\CC(\alpha)+\CC(\beta):=\{v+w \ | \ v\in \CC(\alpha) \
\text{and} \ w \in \CC(\beta) \}$ is equivalent, via the action of
$G$, to an element of $\bigcup_{i=1}^m \mathbb{D}_{{(a,b)}_i}$.
\end{lemma}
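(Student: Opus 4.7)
My plan is to use the $G$-action in two conceptual stages, normalizing the two fixed codes consecutively. First, since $\eta_\alpha(G) = G_{36}$ and $\bigcup_{i=1}^m \tilde{\mathbb{D}}_i$ is a full set of $G_{36}$-orbit representatives on $\mathbb{D}$, I will pick $g_1 \in G$ so that $\mathcal{Y}_\alpha := \pi_\alpha(\CC^{g_1}(\alpha))$ is the chosen representative in some $\tilde{\mathbb{D}}_i$. Because $g_1$ centralizes $\alpha$, we have $\CC^{g_1}(\alpha) = \CC(\alpha)^{g_1}$, and the $g_1$-action on $\CC(\alpha)$ corresponds via $\pi_\alpha$ to the action of $\eta_\alpha(g_1) \in G_{36}$ on $\F_2^{36}$.

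The key step is then to show that $\pi_\beta(\CC^{g_1}(\beta))$ is forced to lie in the same $\tilde{\mathbb{D}}_i$, up to a small twist. Applying equation~(\ref{inter}) to $\CC^{g_1}$ one obtains
$$\pi_\beta(\CC^{g_1}(\beta))(\chi) = \pi_\alpha(\CC^{g_1}(\alpha))(\chi) = \mathcal{Y}_\alpha(\chi) = \mathcal{E}_i.$$
Choose $\omega \in G_{36}$ such that $\mathcal{Y}_\beta := \pi_\beta(\CC^{g_1}(\beta))^\omega$ is a representative in some $\tilde{\mathbb{D}}_j$. Since $\omega$ commutes with $\chi$, one finds $\mathcal{E}_j = \mathcal{Y}_\beta(\chi) = \mathcal{E}_i^\omega$, and because the $\mathcal{E}_k$ are $G_{36}$-orbit representatives this forces $j = i$ together with $\omega \in \Aut(\mathcal{E}_i)$. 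Combined with $\omega \in G_{36} = \textnormal{C}_{S_{36}}(\langle\chi,\mu\rangle)$, this yields $\omega \in \textnormal{C}_{\Aut(\mathcal{Y}_\beta(\chi))}(\langle\chi,\mu\rangle)$, exactly the group appearing in the definition of $\mathbb{D}_{(\alpha,\beta)_i}$.

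To finish, I lift $\omega$ to some $\tilde{\omega} \in \textnormal{C}_{S_{72}}(\beta)$ through the surjection $\eta_\beta$. Compatibility of $\pi_\beta$ with the two actions gives $\CC^{g_1}(\beta) = \pi_\beta^{-1}(\mathcal{Y}_\beta^{\omega^{-1}}) = (\pi_\beta^{-1}(\mathcal{Y}_\beta))^{\tilde{\omega}^{-1}}$, so
$$(\CC(\alpha)+\CC(\beta))^{g_1} = \pi_\alpha^{-1}(\mathcal{Y}_\alpha) + (\pi_\beta^{-1}(\mathcal{Y}_\beta))^{\tilde{\omega}^{-1}} \in \mathbb{D}_{(\alpha,\beta)_i}.$$

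The main conceptual obstacle is that no single $g \in G$ can in general send both $\pi_\alpha(\CC(\alpha))$ and $\pi_\beta(\CC(\beta))$ simultaneously to prescribed transversal representatives; the residual $\omega$-twist is unavoidable, and the construction of $\mathbb{D}_{(\alpha,\beta)_i}$ was designed precisely to absorb it. The reason we can restrict $\omega$ to $\textnormal{C}_{\Aut(\mathcal{Y}_\beta(\chi))}(\langle\chi,\mu\rangle)$, rather than to the much larger group $G_{36}$, relies on equation~(\ref{inter}) together with the refinement of the transversal by the $\mathcal{E}_i$; this is where the whole argument hinges.
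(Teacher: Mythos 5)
Your proposal is correct and follows essentially the same route as the paper's proof: normalize $\CC(\alpha)$ via a lift of an element of $G_{36}$, use equation~(\ref{inter}) to force the $\chi$-fixed subcode of the transformed $\pi_\beta(\CC(\beta))$ to equal $\mathcal{E}_i$, and conclude that the residual twist lies in $\textnormal{C}_{\Aut(\mathcal{Y}_\beta(\chi))}(\langle\chi,\mu\rangle)$. You actually spell out the final step (why the twist $\omega$ is confined to the stabilizer of $\mathcal{E}_i$ in $G_{36}$, and the lifting through $\eta_\beta$) in more detail than the paper, which simply asserts that the listed codes exhaust all possibilities.
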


\begin{proof}
By Lemma \ref{rep} and by construction of $\bigcup_{i=1}^m
\tilde{\mathbb{D}}_i$, there exist $i\in\{1,\ldots,m\}$,
$\mathcal{Y}_a\in \tilde{\mathbb{D}}_i$ and $\bar{\rho}\in G_{36}$
such that $\pi_\alpha(\CC(\alpha))^{\bar{\rho}}=\mathcal{Y}_\alpha$.
Choose $\rho\in\eta_\alpha^{-1}(\overline{\rho})$. Then it is easy
to observe that
\begin{itemize}
\item $\pi_\beta(\CC^\rho(\beta))$ is a self-dual $[36,18,8]$ code;
\item $\langle \chi,\mu\rangle \leq \Aut (\pi_\beta(\CC^\rho(\beta)))$ (since
$\rho \in G$);
\item $(\pi_\beta(\CC^\rho(\beta)))(\chi)=(\pi_\alpha(\CC^\rho(\alpha)))(\chi)=\mathcal{E}_i$ (as in \eqref{inter}).
\end{itemize}
Now, $\{(\mathcal{Y}_\beta)^\tau \ | \ \mathcal{Y}_\beta \in
\tilde{\mathbb{D}}_i, \ \tau\in
\textnormal{C}_{\Aut(\mathcal{Y}_\beta(\chi))}(\langle \chi,\mu
\rangle)\}$ is the set of all possible such codes, so
$(\pi_\beta(\CC^\rho(\beta)))(\chi)$ is one of these codes.
\end{proof}

\begin{remark}
There are, up to equivalence in the full symmetric group $S_{72}$,
only $22$ codes in $\bigcup_{i=1}^m \mathbb{D}_{{(\alpha,\beta)}_i}$
such that the minimum distance is at least $16$, say
$\mathcal{D}_1,\ldots,\mathcal{D}_{22}$. They are all $[72,26,16]$
codes. In particular
$$\dim(\mathcal{D}_i(\alpha)\cap \mathcal{D}_i(\beta))=10.$$
\end{remark}

\begin{coro}
The code $\CC(\alpha)+\CC(\beta)$ is equivalent, via the action of
the full symmetric group $S_{72}$, to a code $\mathcal{D}_i$, with
$i\in\{1,\ldots,22\}$.
\end{coro}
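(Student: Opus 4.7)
The plan is to combine the preceding Lemma with the Remark immediately above the Corollary, the only additional input being the minimum distance constraint inherited from $\CC$.

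First I would note that $\CC(\alpha) + \CC(\beta)$ is a subcode of $\CC$: both $\CC(\alpha)$ and $\CC(\beta)$ are subcodes of $\CC$, hence so is their sum. Since $\textnormal{d}(\CC)=16$, every nonzero codeword of $\CC(\alpha) + \CC(\beta)$ has weight at least $16$, i.e.\ $\textnormal{d}(\CC(\alpha)+\CC(\beta))\geq 16$.

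Next, by the preceding Lemma there exist $i\in\{1,\ldots,m\}$ and an element $g\in G\leq S_{72}$ such that $(\CC(\alpha)+\CC(\beta))^g$ belongs to $\mathbb{D}_{(\alpha,\beta)_i}$. Because $g$ is a permutation, the minimum distance is preserved, so $(\CC(\alpha)+\CC(\beta))^g$ is an element of $\bigcup_{i=1}^m \mathbb{D}_{(\alpha,\beta)_i}$ of minimum distance at least $16$. By the Remark there are, up to equivalence under the full symmetric group $S_{72}$, exactly $22$ such codes, namely $\mathcal{D}_1,\ldots,\mathcal{D}_{22}$. Hence $(\CC(\alpha)+\CC(\beta))^g$ is $S_{72}$-equivalent to some $\mathcal{D}_i$, and since $g\in S_{72}$, the original code $\CC(\alpha)+\CC(\beta)$ is itself $S_{72}$-equivalent to the same $\mathcal{D}_i$.

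There is no real obstacle here: the Corollary is essentially a repackaging of the Lemma, weakening $G$-equivalence to $S_{72}$-equivalence so that one can invoke the precomputed classification of $22$ extremal candidates from the Remark. The only substantive point to record is the minimum distance inequality $\textnormal{d}(\CC(\alpha)+\CC(\beta))\geq \textnormal{d}(\CC)=16$, which is what forces the sum into the short list of $22$ codes rather than the full set $\bigcup_{i=1}^m \mathbb{D}_{(\alpha,\beta)_i}$.
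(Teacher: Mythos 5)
Your proposal is correct and is precisely the argument the paper intends (the Corollary is stated without proof as an immediate consequence of the preceding Lemma and Remark): the sum $\CC(\alpha)+\CC(\beta)$ is a subcode of $\CC$ and so inherits minimum distance at least $16$, and the Lemma places it, up to the action of $G\leq S_{72}$, in $\bigcup_{i=1}^m \mathbb{D}_{{(\alpha,\beta)}_i}$, whose members of minimum distance at least $16$ are exactly the $22$ classes $\mathcal{D}_1,\ldots,\mathcal{D}_{22}$ listed in the Remark. You correctly isolate the one substantive ingredient, namely the inequality $\textnormal{d}(\CC(\alpha)+\CC(\beta))\geq \textnormal{d}(\CC)=16$.
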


We can repeat in a completely analogous way all the procedure for
the pairs $(\alpha,\gamma)$ and $(\beta,\gamma)$, interchanging the
roles of the elements $\alpha,\beta$ and $\gamma$. Then we get the
following.

\begin{coro}
The codes $\CC(\alpha)+\CC(\gamma):=\{v+w \ | \ v\in \CC(\alpha) \
\text{and} \ w \in \CC(\gamma) \}$ and
$\CC(\beta)+\CC(\gamma):=\{v+w \ | \ v\in \CC(\beta) \ \text{and} \
w \in \CC(\gamma) \}$ are equivalent, via the action of the full
symmetric group $S_{72}$, to some codes $\mathcal{D}_j$ and
$\mathcal{D}_k$, with $j,k\in\{1,\ldots,22\}$.
\end{coro}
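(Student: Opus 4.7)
The plan is to reduce both equivalences to the preceding Corollary by an $S_3$-symmetry argument on the generators of $\langle \alpha,\beta,\gamma\rangle$. The proof of that Corollary uses nothing special about the pair $(\alpha,\beta)$: it only exploits the fact that $\alpha$ and $\beta$ are two commuting fixed point free involutions such that, together with a third one, they generate an elementary abelian subgroup of $\Aut(\CC)$ of order $8$. Hence, if we can conjugate $\CC$ inside $S_{72}$ by a permutation that relabels the three involutions appropriately, the case already treated will imply the analogous statement for either of the remaining two pairs.

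The technical input needed is that every automorphism $\phi$ of the abstract group $\langle \alpha,\beta,\gamma\rangle$ is induced by conjugation by some $\pi \in S_{72}$. By \cite{Bord2} all non-trivial elements of $\langle \alpha,\beta,\gamma\rangle$ act without fixed points, so the action on $\{1,\ldots,72\}$ is semi-regular and decomposes into $9$ regular orbits of length $8$. On each such orbit every $\phi$ is realized by a suitable permutation of the $8$ points (pick a base point $x_0$ and send $x_0 \cdot g$ to $x_0 \cdot \phi(g)$); combining these independent orbit-wise choices yields the desired element of $N_{S_{72}}(\langle\alpha,\beta,\gamma\rangle)$. In particular, I may pick $\pi \in S_{72}$ with $\pi\alpha\pi^{-1}=\alpha$, $\pi\beta\pi^{-1}=\gamma$ and $\pi' \in S_{72}$ with $\pi'\alpha\pi'^{-1}=\beta$, $\pi'\beta\pi'^{-1}=\gamma$.

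Setting $\CC' := \CC^\pi$, the right-action identity $(c^\pi)^\sigma = c^{\pi\sigma}$ on $\F_2^{72}$ yields $\CC'(x) = \CC(\pi x \pi^{-1})^\pi$ for every $x \in S_{72}$, hence $\CC'(\alpha) = \CC(\alpha)^\pi$ and $\CC'(\beta) = \CC(\gamma)^\pi$, so that
\[
\CC'(\alpha) + \CC'(\beta) = (\CC(\alpha)+\CC(\gamma))^\pi.
\]
Since $\pi$ normalizes $\langle\alpha,\beta,\gamma\rangle$, the code $\CC'$ is again a self-dual $[72,36,16]$ code with $\langle\alpha,\beta,\gamma\rangle \leq \Aut(\CC')$, so the preceding Corollary applied to $\CC'$ exhibits a $\mathcal{D}_j$ which is equivalent to $\CC'(\alpha)+\CC'(\beta)$ via $S_{72}$, and therefore also to $\CC(\alpha)+\CC(\gamma)$. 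Repeating the argument with $\pi'$ in place of $\pi$ gives $\CC(\beta)+\CC(\gamma)$ equivalent, via $S_{72}$, to some $\mathcal{D}_k$. No genuine obstacle arises: the argument is a pure symmetry reduction to the case already treated, and the only delicate step is the bookkeeping of the right action that guarantees $\pi$ carries $\CC(\alpha)+\CC(\gamma)$ to $\CC'(\alpha)+\CC'(\beta)$.
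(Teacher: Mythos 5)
Your proof is correct, but it takes a genuinely different route from the paper. The paper gives no symmetry reduction here: it simply asserts that ``we can repeat in a completely analogous way all the procedure for the pairs $(\alpha,\gamma)$ and $(\beta,\gamma)$,'' i.e.\ it re-runs the construction of the sets $\mathbb{D}_{{(\cdot,\cdot)}_i}$ and the accompanying {\sc Magma} computation for each of the other two pairs, implicitly relying on that recomputation to show that the resulting extremal sums land in the same list $\mathcal{D}_1,\ldots,\mathcal{D}_{22}$. You instead reduce both remaining cases to the one already treated by conjugating $\CC$ with an element of $N_{S_{72}}(\langle\alpha,\beta,\gamma\rangle)$ realizing a prescribed automorphism of $C_2^3$; the key input, that every automorphism of a semiregular subgroup is induced by the normalizer in the ambient symmetric group (via the orbit-wise holomorph construction on the nine regular orbits of length $8$), is standard and correctly invoked, and your bookkeeping of the right action ($\CC^\pi(x)=\CC(\pi x\pi^{-1})^\pi$, hence $\CC^\pi(\alpha)+\CC^\pi(\beta)=(\CC(\alpha)+\CC(\gamma))^\pi$) is accurate. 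What your approach buys is a computation-free argument that also \emph{explains} why the same $22$ codes must occur for all three pairs, rather than verifying it anew; what the paper's approach buys is directness and an independent computational cross-check. One small point worth making explicit in your write-up: the preceding Corollary is stated for the fixed code $\CC$ of Section~3, so you should note (as you essentially do) that $\CC^\pi$ again satisfies all the standing hypotheses --- it is self-dual, extremal, and has automorphism group exactly $\langle\alpha,\beta,\gamma\rangle$ with the same relabelled generators --- so that the Corollary legitimately applies to it.
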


This implies that
\begin{equation}\label{eq1}\dim(\CC(\alpha)\cap
\CC(\gamma))=10 \quad \text{and} \quad \dim(\CC(\beta)\cap
\CC(\gamma))=10.
\end{equation}
Furthermore, by {\sc Magma} calculations we get that
\begin{equation}\label{eq2} \dim(\CC(\alpha) \cap \CC(\beta)\cap \CC(\gamma))=5.
\end{equation}
Both statements can be verified by taking all the elements
$\alpha',\beta',\gamma'$ of order $2$ and degree $72$ in
$\Aut(\mathcal{D}_i)$ such that $\langle
\alpha',\beta',\gamma'\rangle$ is conjugate to $\langle
\alpha,\beta,\gamma\rangle$ in $S_{72}$.

To get a contradiction it is now enough to observe that \eqref{eq1}
and \eqref{eq2} are not compatible with the table in Lemma
\ref{cases}. So we conclude the following.

\begin{teo}
The automorphism group of a self-dual $[72,36,16]$ code does not
contain a subgroup isomorphic to $C_2\times C_2 \times C_2$.
\end{teo}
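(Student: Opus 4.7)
The plan is to derive a contradiction from the assumption that $\Aut(\CC)$ contains the elementary abelian group $\langle \alpha,\beta,\gamma\rangle\cong C_2\times C_2\times C_2$, with the three involutions fixed point free and placed in the normalized form set up in Section 3. The strategy is to extract rigid dimensional data about the pairwise intersections $\CC(\chi')\cap\CC(\mu')$ and the triple intersection $\CC(\alpha)\cap\CC(\beta)\cap\CC(\gamma)$, and then check these values against the admissible combinations recorded in Lemma \ref{cases}.

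First I would exploit the classification of $\CC(\alpha)+\CC(\beta)$ already established in the preceding corollary: this sum is $S_{72}$-equivalent to one of the $22$ codes $\mathcal{D}_i$, each of which is a $[72,26,16]$ code satisfying $\dim(\mathcal{D}_i(\alpha)\cap\mathcal{D}_i(\beta))=10$. Since every step of the construction is symmetric in the chosen pair of involutions, rerunning it for the pairs $(\alpha,\gamma)$ and $(\beta,\gamma)$ yields the same conclusion, so
$$\dim(\CC(\alpha)\cap\CC(\beta))=\dim(\CC(\alpha)\cap\CC(\gamma))=\dim(\CC(\beta)\cap\CC(\gamma))=10.$$

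Next I would compute the triple intersection. For each of the $22$ codes $\mathcal{D}_i$, I would enumerate in \textsc{Magma} every triple $(\alpha',\beta',\gamma')$ of fixed point free involutions inside $\Aut(\mathcal{D}_i)$ such that $\langle \alpha',\beta',\gamma'\rangle$ is $S_{72}$-conjugate to $\langle \alpha,\beta,\gamma\rangle$, and evaluate $\dim(\mathcal{D}_i(\alpha')\cap\mathcal{D}_i(\beta')\cap\mathcal{D}_i(\gamma'))$ for each such triple. The expectation is that this dimension is always $5$; by the equivalence of $\CC(\alpha)+\CC(\beta)$ with some $\mathcal{D}_i$ and the analogous statements for the other two pairs, the same value then holds for $\CC$.

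The final step is a single-line comparison with Lemma \ref{cases}: the unique row whose two pair dimensions are both $10$ forces the triple intersection to have dimension $6$, whereas we have just obtained $5$. This is the contradiction. The main obstacle is not the combinatorial conclusion, which is immediate, but ensuring the computational step is sound: one must verify that the \textsc{Magma} enumeration of conjugate triples inside each $\mathcal{D}_i$ is exhaustive and that the ambient $S_{72}$-conjugacy test is applied correctly, since even a single overlooked triple producing dimension $6$ would destroy the argument.
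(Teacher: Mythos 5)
Your proposal matches the paper's own proof essentially step for step: both derive $\dim(\CC(\chi')\cap\CC(\mu'))=10$ for all three pairs from the classification of the $22$ codes $\mathcal{D}_i$, both obtain $\dim(\CC(\alpha)\cap\CC(\beta)\cap\CC(\gamma))=5$ by the same \textsc{Magma} enumeration of conjugate triples of fixed point free involutions in $\Aut(\mathcal{D}_i)$, and both conclude by noting that the only row of the table in Lemma \ref{cases} with both pairwise dimensions equal to $10$ forces the triple intersection to have dimension $6$. The argument is correct and is the same as the paper's.
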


\section*{Acknowledgment}

The author expresses his gratitude to F. Dalla Volta and G. Nebe for
the fruitful discussions and suggestions. \emph{Laboratorio di
Matematica Industriale e Crittografia} of Trento deserves thanks for
the help in the computational part.

\end{document}